\newtheorem{theorem}{Theorem}[section]
\newtheorem*{remark}{Remark}
\icmltitlerunning{Multi-Agent Deep Reinforcement Learning for Liquidation Strategy Analysis}
\begin{document}

\twocolumn[
\icmltitle{Multi-Agent Deep Reinforcement Learning for Liquidation Strategy Analysis}

% It is OKAY to include author information, even for blind
% submissions: the style file will automatically remove it for you
% unless you've provided the [accepted] option to the icml2019
% package.

% List of affiliations: The first argument should be a (short)
% identifier you will use later to specify author affiliations
% Academic affiliations should list Department, University, City, Region, Country
% Industry affiliations should list Company, City, Region, Country

% You can specify symbols, otherwise they are numbered in order.
% Ideally, you should not use this facility. Affiliations will be numbered
% in order of appearance and this is the preferred way.
\icmlsetsymbol{equal}{*}

\begin{icmlauthorlist}
\icmlauthor{Wenhang Bao}{stats}
\icmlauthor{Xiao-Yang Liu}{ee}
%\icmlauthor{Cieua Vvvvv}{goo}
%\icmlauthor{Iaesut Saoeu}{ed}
%\icmlauthor{Fiuea Rrrr}{to}
%\icmlauthor{Tateu H.~Yasehe}{ed,to,goo}
%\icmlauthor{Aaoeu Iasoh}{goo}
%\icmlauthor{Buiui Eueu}{ed}
%\icmlauthor{Aeuia Zzzz}{ed}
%\icmlauthor{Bieea C.~Yyyy}{to,goo}
%\icmlauthor{Teoau Xxxx}{ed}
%\icmlauthor{Eee Pppp}{ed}
\end{icmlauthorlist}

\icmlaffiliation{stats}{Department of Statistics, Columbia University , New York, US}, 
\icmlaffiliation{ee}{Electrical Engineering, Columbia University, New York, US}
%\icmlaffiliation{ed}{School of Computation, University of Edenborrow, Edenborrow, United Kingdom}

\icmlcorrespondingauthor{Wenhang Bao}{wb2304@columbia.edu}
\icmlcorrespondingauthor{Xiao-Yang Liu}{xl2427@columbia.edu}

% You may provide any keywords that you
% find helpful for describing your paper; these are used to populate
% the "keywords" metadata in the PDF but will not be shown in the document
\icmlkeywords{Multi-Agent, Reinforcement Learning, Liquidation, DDPG, Actor-Critic}

%\vskip 0.3in
]

% this must go after the closing bracket ] following \twocolumn[ ...

% This command actually creates the footnote in the first column
% listing the affiliations and the copyright notice.
% The command takes one argument, which is text to display at the start of the footnote.
% The \icmlEqualContribution command is standard text for equal contribution.
% Remove it (just {}) if you do not need this facility.

\printAffiliationsAndNotice{}  % leave blank if no need to mention equal contribution
%\printAffiliationsAndNotice{\icmlEqualContribution} % otherwise use the standard text.

\begin{abstract}

Liquidation is the process of selling a large number of shares of one stock sequentially within a given time frame, taking into consideration the costs arising from market impact and a trader's risk aversion. The main challenge in optimizing liquidation is to find an appropriate modeling system that can incorporate the complexities of the stock market and generate practical trading strategies. In this paper, we propose to use multi-agent deep reinforcement learning model, which better captures high-level complexities comparing to various machine learning methods, such that agents can learn how to make best selling decisions. First, we theoretically analyze the Almgren and Chriss model and extend its fundamental mechanism so it can be used as the multi-agent trading environment. Our work builds the foundation for future multi-agent environment trading analysis. Secondly, we analyze the cooperative and competitive behaviors between agents by adjusting the reward functions for each agent, which overcomes the limitation of single-agent reinforcement learning algorithms. Finally, we simulate trading and develop optimal trading strategy with practical constraints by using reinforcement learning method, which shows the capabilities of reinforcement learning methods in solving realistic liquidation problems.

\end{abstract}

\section{Introduction}
\label{sect:Introduction}

% What is liquidation/why it is important
Liquidation, as one kind of stock trading, is one of the main functions of financial institutes, and the ability to minimize selling cost and manage risk level would be a key indicator of their financial performance. Therefore, effective trading strategy is of great importance. Financial institutes are updating their strategies recently, by making use of advanced research results or cutting-edge technologies. However, there are several challenges. First, liquidation of a large number of stock shares would have huge impact on the market, making the environment difficult to predict. Secondly, current methods for static environment ignore the dynamic and interactive nature of the stock market. Thirdly, the trading cost of liquidation depends on the stock market, and researchers are usually not able to collect enough historical events data to obtain practical trading insights.

\begin{figure}
    \centering
    \includegraphics[scale = 0.4]{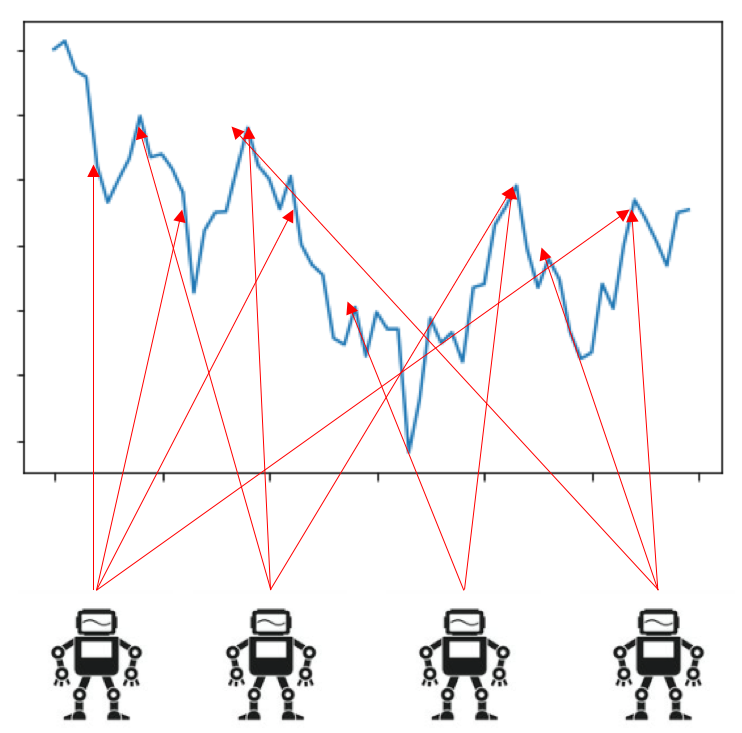}
    \caption{Liquidation: multiple agents sell stocks in the market, and their selling decisions would affect each others' selling cost}
    \label{fig:multi_agent}
\end{figure}

% current situation.
Financial modeling and machine learning are two popular approaches in developing trading strategies, but both of them have limitations. For the past years, financial institutes rely on experienced traders to minimize trading cost and manage liquidation risk. Also, researchers build mathematical and financial models to help develop liquidation strategies \cite{gomber2011high,Brogaard10highfrequency}. However, mathematical and financial modelling methods rely on their assumptions, which usually over-simplify the problem. Most recently, researchers started to adopt machine learning methods as well. 

Reinforcement learning (RL), one type of machine learning methods, consists of agents interacting with the environment to learn an optimal policy by trail and error for sequential decision-making problems \cite{sutton2018reinforcement,van2016deep}. While most of the successes of RL have been in the single agent domain, where modelling or predicting the behavior of other actors in the environment is not considered, the obtained trading strategy \cite{xiong2018practical} ignores the stochastic and interactive nature of the trading market. A more general scenario would be that multiple organizations or customers want to liquidate their assets under certain market conditions at the same time. Therefore, the trading market would have multiple players or institutes with similar objectives \cite{bansal2017emergent,tampuu2017multiagent}, and the behavior of one agent would affect other agents' behaviors \cite{yang2018mean}, as shown in Fig \ref{fig:multi_agent}. Another scenario would be even if there is only one company working on the liquidation of one stock, but still there could be multiple traders and each of them be responsible for a certain percentage of shares to sell. 

This calls for the demand of applying multi-agent reinforcement learning methods to the financial industry, which has not been well-studied as much as single-agent reinforcement learning. There are attractive successes of multi-agent deep reinforcement learning in the fields of gaming playing \cite{silver2016mastering,mnih2015human}, robotics and financial trading system \cite{yu2019model,buehler2019deep}. The main benefit of using reinforcement learning for liquidation is that mathematical models or hard-coded trading strategies can be avoided. Reinforcement learning agent would learn the trading strategy on its own. In addition, a simulated environment would allow agents to adapt to different market conditions and trade stocks, and obtain far more experience than human traders could obtain in real financial market \cite{schaul2015prioritized,foerster2017stabilising}. Last but not the least,  multi-agent reinforcement learning algorithms can take into account high-level environment complexities \cite{hendricks2014reinforcement} and derive more practical liquidation strategies accordingly. 

% summary of this article
% contributions of this article
The main contribution of this paper is the analysis of the multi-agent trading environment, the impact analysis of coordinated relationship between agents, and the derivation of liquidation strategies. Ideally, if the multi-agent environment is complex enough to incorporate all potential players' behaviors, there would be no noise in the stock market, as all orders are generated by players, and all players' behaviors are modelled systematically by the multi-agent system. We build a simplified version of the multi-agent environment, which is the foundation of more complicated environments. First, we extend the model proposed by Almgren and Chriss \cite{almgren2001optimal} to the multi-agent environment and provide mathematical proofs. We make use of reinforcement learning to verify our theorems and conclude with the necessity to use multi-agent reinforcement learning instead of conventional single-agent reinforcement learning algorithms to analyze the liquidation problem. Secondly, we demonstrate how agents learn to cooperate or compete with each other by defining proper reward functions, analyze how these agents influence each other as well as the environment as a whole, which cannot be analyzed by a single-agent environment, but of great importance to financial institutes. Thirdly, we derive trading strategies for each agent in a simulated multi-agent environment. This demonstrates the capabilities of reinforcement learning algorithms in learning and developing practical liquidation strategies. 

The remainder of this paper is organized as follows. Section \ref{sect:Problem Statement} describes the liquidation problem and reviews the Almgren and Chriss model that is used to simulate the  market environment. Section \ref{sect:Solution} introduces the detailed settings of multi-agent reinforcement learning. Section \ref{sect:theorem} is about the extension of the multi-agent market environment. Section \ref{sect:Results} presents the experimental results where we demonstrate how agents would behave in cooperative or competitive relationships and how to derive liquidation strategy. Section \ref{sect:Conclusion} concludes this paper and points out some future direction. Code is available at: https://github.com/WenhangBao/Multi-Agent-RL-for-Liquidation

\section{Problem Description}
\label{sect:Problem Statement}
In this section, we first describe the liquidation problem and explain why it is feasible to use reinforcement learning algorithms to address it. Then we describe the Almgren and Chriss model or the trading environment. 

% what is liquidation
\subsection{Optimal Liquidation Problem}
\label{Liquidation}
% Describe agent
We consider a liquidation trader who aims to sell $X$ shares of one stock within a time frame $T$. Liquidator's personal characteristics, such as risk aversion level $\lambda$, would remain unchanged throughout the process. The trader can either sell or not sell stocks, but cannot buy any stock during the time frame $T$. On the last day of the time frame, the liquidation process ends and the number of shares should be $0$. Since the trading volume is tremendous, the market price $P$ will drop during selling, temporarily or permanently, potentially resulting in enormous trading costs. 

% Describe environment
The trader or the representative financial institute seeks to find an optimal selling strategy, minimizing the expected trading cost $E(X)$, or called \textit{implementation shortfall}, subject to certain optimization criterion. The trader would know all the environment information includes price, historical price and number of trading days remaining. If there are $J$ traders, they would not know other traders' information. For instance, they would not know other traders' remaining shares or risk aversion levels.  

Based on the assumption that the trading would have market impacts as well as that agents and environment are interactive, it is feasible to train agents in the environment and derive liquidation strategies with reinforcement learning algorithms \cite{yang2018practical}. 

% What is Almgren and Chriss model
\subsection{Environment Model for the Simulation}
\label{ACmodel}
The problem of an optimal liquidation strategy is investigated by using the Almgren-Chriss market impact model \cite{almgren2001optimal} on the background that the agents liquidate assets completely in a given time frame. The impact of the stock market is divided into three components: unaffected price process, permanent impact, and temporary impact. The stochastic component of the price process exists, but is eliminated from the mean-variance. The price process permits linear functions of permanent and temporary price. Therefore, the model serves as the trading environment such that when agents make selling decisions, the environment would return price information. 
% Mathematically, the Nash equilibrium is considered as the second-order linear differential equation with variable coefficients. We prove the existence and uniqueness of solutions for the differential equation with two boundaries and find the closed-form solutions in special situations. The numerical examples and properties of the solution are given. The corresponding finance phenomenon is interpreted.

The price process of the Almgren and Chriss model \cite{almgren2001optimal} is as follows:

\begin{itemize}
    \item Price under temporary and permanent impact
    $$
    P_k = P_{k-1} + \sigma \tau^{1/2} \xi_k -\tau g(\frac{n_k}{\tau}), k = 1,\ldots,N
    $$
where $\sigma$ represents the volatility of the stock, $\xi_k$ are random variables with zero mean and unit variance, $g(v)$ is a function of the average rate of the trading, $v = n_k/ \tau$ during time interval $t_{k-1}$ to $t_k$, $n_k$ is the number of shares to sell during time interval $t_{k-1}$ to $t_k$, $N$ is the total number of trades and $\tau = T/N$.
    \item Inventory process: $x_{t_k} = X - \sum_{j=1}^{k}n_j$, where $x_{t_k}$ is the number of shares remaining at time $t_k$, with $x_T = 0$.
    %\item $T$ liquidation time, $0 < T < \infty$
    %\item $T$ chosen at $t = 0$
    \item Linear permanent impact function
    $g(v) = \gamma v$, where $v = \frac{n_k}{\tau}$.
    \item Temporary impact function
    $h(\frac{n_k}{\tau}) = \epsilon~\text{sgn}(n_k) + \frac{\eta}{\tau} n_k$, where a reasonable estimate of $\epsilon$ is the fixed costs of selling, and $\eta$ depends on internal and transient aspects of the market micro-structure.
    \item Parameters $\sigma, \gamma, \eta, \epsilon$, time frame $T$, number of trades $N$ are set at $t = 0$.
\end{itemize}

\section{Deep Reinforcement Learning Approach}
\label{sect:Solution}

We model the liquidation process as a Markov decision process (MDP), and then formulate the multi-agent setting we used to resolve the problem. The training diagram is also covered, which explains how multiple agents interact and learn from environment in details. We use implementation shortfall as the metric of selling cost, and the properties of MDP process allows us to define the goal as minimizing the expected implementation shortfall.  

%MDP
\subsection{Liquidation as a MDP Problem}
\label{DPL:liquidatoin}
 Consider the stochastic and interactive nature of the trading market, we model the stock trading process as a Markov decision process, which is specified as follows:
\begin{itemize}
    \item State $s = [\bm{r},m,l]$: a set that includes the information of the log-return $\bm{r} \in \mathbb{R}_+^D$, where $D$ is the number of days of log-return, and the remaining number of trades $m$ normalized by the total number of trades, the remaining number of shares $l$, normalized by the total number of shares. The log-returns capture information about stock prices before time $t_k$, where $k$ is the current step. It is important to note that in real world trading scenarios, this state vector may hold more variables. 
    \item Action $a$: we interpret the action $a_k$ as a selling fraction. In this case, the actions will take continuous values in between 0 and 1.
    \item Reward $R(s,a)$: to define the reward function, we use the difference between two consecutive utility functions. The utility function is given by:
    \begin{align}
        U(\bm{x}) &= E(\bm{x}) + \lambda V(\bm{x}) ,\label{eq:1}\\
        E(\bm{x}) &= \sum_{k=1}^{N}\tau x_k g(\frac{n_k}{\tau}) + \sum_{k=1}^{N}n_k h(\frac{n_k}{\tau}),\label{eq:2}\\
        V(\bm{x}) &= \sigma^2 \sum_{k=1}^{N}\tau x_k^2,\label{eq:3}
    \end{align}
        
where $\lambda$ is the risk aversion level, and $\bm{x}$ is the trading trajectory or the vector of shares remaining at each time step $k,~ 0 \le t_k \le T$. After each time step, we compute the utility using the equations for $E(\bm{x})$ and $V(\bm{x})$ from the Almgren and Chriss model for the remaining time and inventory while holding parameter $\lambda$ constant. Denotes the optimal trading trajectory computed at time $t$ by $\bm{x}_t^*$, we define the reward as:
        \begin{equation}
            R_{t} = {U_t(\bm{x}_t^*)-U_{t+1}(\bm{x}_{t+1}^*)} .
        \end{equation}
    \item Policy $\pi (s)$: The liquidation strategy of stocks at state $s$. It is essentially the distribution of selling percentage $a$ at state $s$.
    \item Action-value function $Q_\pi(s,a)$: the expected reward achieved by action $a$ at state $s$, following policy $\pi$.
\end{itemize}

% reinforcement learning setup states, actions, rewards
\subsection{Multi-agent Reinforcement Learning Setting}
\label{DPL:Setting}
The advantages of multi-agent over single-agent reinforcement learning is the ability to incorporate high-level complexities in the system. The single-agent environment is a special case where the number of agents $J=1$. It simplifies the problem and would automatically inherit all properties from the multi-agent environment. Following the MDP configuration in the last section, we specify our multi-agent reinforcement learning setting as follows:

\begin{itemize}
    \item States $s = [\bm{r},m,\bm{l}]$ : in a multi-agent environment, the state vector should have information about the remaining stocks of each agent. Therefore, in a $J$ agents environment, the state vector at time $t_k$ would be:
        $$
        [r_{k-D},\ldots,r_{k-1},r_k,m_k,l_{1,k},\ldots,l_{J,k}],
        $$
where
    \begin{itemize}
        \item $r_k = \log(\frac{P_k}{P_{k-1}})$ is the log-return at time $t_k$.
        \item $m_k = \frac{N_k}{N}$ is the number of trades remaining at time $t_k$ normalized by the total number of trades.
        \item $l_{j,k} = \frac{x_{j,k}}{X_j}$ is the remaining number of shares for agent $j$ at time $t_k$ normalized by the total number of shares. 
    \end{itemize}

    \item Action $a$: using the interpretation in Section \ref{DPL:liquidatoin}, we can determine the number of shares to sell for each at each time step using:
        $$
        n_{j,k} = a_{j,k} \times x_{j,k} ,
        $$
where  $x_{j,k}$ is the number of remaining shares at time $t_k$ for agent $j$.

    \item Reward $R(s,a)$: denotes the optimal trading trajectory computed at time $t$ for agent $j$ by $x_{j,t}^*$, we define the reward as:
        \begin{equation}
            R_{j,t} = {U_{j,t}(\bm{x}_{j,t}^*)-U_{j,t+1}(\bm{x}_{j,t+1}^*)}.
        \end{equation}
        
    \item Observation $O$: Each agent only observes limited state information \cite{omidshafiei2017deep}. In other words, in addition to the environment information, each agent only knows its own remaining shares, but not other agents' remaining shares. The observation vector at time $t_k$ for agent $j$ is:
        $$
        O_{j,k} = [r_{k-D},\ldots,r_{k-1},r_k,m_k,l_{j,k}].
        $$
\end{itemize}

\subsection{Deep Reinforcement Learning Algorithm}
\iffalse
\begin{figure}
    \centering
    \includegraphics[scale = 0.25]{}
    \caption{Actor-Critic method}
    \label{fig:Actor-Critic}
\end{figure}
\fi 
% Actor-Critic methods
We adopt the Actor-Critic \cite{mnih2016asynchronous,lowe2017multi} method that uses neural networks to approximate both the Q-value and the action. The critic learns the Q-value function and uses it to update actor's policy parameters. The critic network estimates the expected return of a state-action pair.  %, as shown in Fig . \ref{fig:Actor-Critic}. 
The actor brings the advantage of computing continuous actions without the need of a Q-value function, while the critic supplies the actor with knowledge of the performance. The actor network has state $s$ as input and returns action $a$ directly. Actor-critic methods usually have good convergence properties, in contrast to critic-only methods. 

\begin{algorithm}[H]
    \caption{DDPG-Based Multi-agent Training}
    \KwInput{number of episodes $M$, time frame $T$, minibatch size $N$, learning rate $\lambda$, and number of agents $J$}
    \begin{algorithmic}[1]
    \label{alg:DDPG}
        \FOR{$j = 1, J$ \% initialize each agent separately}
            \STATE Randomly initialize critic network $Q_j(O_j,a|\theta_j^Q)$ and actor network $\mu_j(O_j|\theta_j^\mu)$ with random weight $\theta_j^Q$ and $\theta_j^\mu$ for agent $j$;
            \STATE Initialize target network $Q'_j$ and $\mu'_j$ with weights $\theta_j^{Q'} \leftarrow \theta_j^{Q}$, $\theta_j^{\mu'} \leftarrow \theta_j^{\mu}$ for each agent $j$;
            \STATE Initialize replay buffer $B_j$ for each agent $j$;
        \ENDFOR
        \FOR {episode $= 1, M$}
            \STATE Initialize a random process $\mathcal{N}$ for action exploration;
            \STATE Receive initial observation state $s_0$;
            \FOR{$t = 1, T$}
                \FOR{$j = 1, J$ \%train each agent separately}
                    \STATE {Select action $a_{j,t} = \mu_j(O_{j,t}|\theta_j^\mu) + \mathcal{N}_t$ according to the current policy and exploration noise;}
                \ENDFOR
                \STATE {Each agent executes action $a_{j,t}$;}
                \STATE {Market state changes to $s_{t+1}$};
                \STATE {Each agent observes reward $r_{j,t}$ and observation $O_{j,t+1}$};
                \FOR{$j = 1, J$}
                    \STATE {Store transition ($O_{j,t}$, $a_{j,t}$, $r_{j,t}$, $O_{j,t+1})$ in $B_j$;}
                    \STATE {Sample a random minibatch of $N$ transitions ($O_{j,i}$ , $a_{j,i}$ , $r_{j,i}$ , $O_{j,i+1}$) from $B_j$;}
                    \STATE {Set \\
                    $y_{j,i} = r_{j,i}+\gamma Q'_j (s_{t+1}, \mu'_j (O_{j,i+1}|\theta_j^{\mu'}|\theta_j^{Q'}))$ \\
                    for $i = 1, \ldots, N$;}
                    \STATE {Update the critic by minimizing the loss: $L = \frac{1}{N}\sum_i(y_{j,i} -Q_j(O_{j,i},a_{j,i}|\theta_j^Q))^2$;}
                    \STATE {Update the actor policy by using the sampled policy gradient:
                        \begin{multline*}
                            \nabla_{\theta^\mu} \pi \approx \frac{1}{N}\sum_i \nabla_a Q_j(O,a|\theta_j^Q)|_{O = O_{j,i},a = \mu_j(O_{j,i})}\\
                            \times \nabla_{\theta^\mu} \mu_j(O_j|\theta^\mu)|_{s_i};
                        \end{multline*}
                    }
                    \STATE {Update the target networks:\\
                    $\theta_j^{Q'} \leftarrow \tau \theta_j^Q + (1-\tau)\theta_j^{Q'},$\\
                    $\theta_j^{\mu'} \leftarrow \tau \theta_j^\mu + (1-\tau)\theta_j^{\mu'}.$}
                \ENDFOR
            \ENDFOR
        \ENDFOR
    \end{algorithmic}
\end{algorithm}

\iffalse
In multi-agent setting where agent only receives partial information, for each agent, we replace state $s$ with observations $O_j$ defined in Section \ref{DPL:Setting}, and other process remains the same. 
\fi 

The Deep Deterministic Policy Gradients (DDPG) algorithm \cite{lillicrap2015continuous} is one example of an actor-critic method. We will use DDPG to generate the optimal execution strategy of liquidation. DDPG uses three skills to make sure it gets converged experimental results: experience replay buffer, learning rate and exploration noise. Experienced replay method \cite{wang2016sample} enables the stochastic gradient decent method and removes correlations between consecutive transitions. Learning rate controls the updating speed of the neural network. Exploration noise addresses the exploration and exploitation trade-off. With these training skills, the agent would learn from trail and error and find the optimal trading trajectory that minimizes the trading cost. In other words, we will use the DDPG algorithm or Alg. \ref{alg:DDPG} to solve the optimal liquidation problem. 
% Deep Learning Model Set up
% \subsection{Neural Network Model Setup}

\section{Performance Analysis}
\label{sect:theorem}
Here we extend the classical environment model to multi-agent scenario for liquidation problem analysis.

\subsection{Optimal Multi-agent Liquidation Shortfall}
\label{results:Expected_implementation_Shortfall}

\begin{theorem}
\label{First_theorem_of_multi_agents_trading}
In a multi-agent environment with $J$ agents where each agent has $X_j$ shares to sell within a given time frame $T$, the total expected shortfall is larger than or equal to the sum of expected shortfall that these agents would obtain if they are in single-agent environment, such that:
        \begin{equation}
            \sum_{j=1}^{J}E(X_j) \le E(\sum_{j=1}^{J}X_j),
        \end{equation}
where $E(X)$ is the expected implementation shortfall of liquidating $X$ shares of a stock.
\end{theorem}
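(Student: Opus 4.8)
The plan is to reduce the optimal expected shortfall $E(X)$ to an explicit function of the share count $X$ and then exploit its convexity. First I would substitute the linear permanent impact $g(v)=\gamma v$ and the temporary impact $h(n_k/\tau)=\epsilon\,\text{sgn}(n_k)+\tfrac{\eta}{\tau}n_k$ into the cost functional (2). Since the trader only sells, every $n_k\ge 0$, so $\sum_k n_k = X$ and $\sum_k |n_k| = X$. The permanent-impact term becomes $\gamma\sum_k x_k n_k$, which I would collapse with the telescoping identity $\sum_{k=1}^{N}(x_{k-1}+x_k)n_k = x_0^2-x_N^2 = X^2$ (using $n_k=x_{k-1}-x_k$ and $x_N=0$), giving $\sum_k x_k n_k = \tfrac12\big(X^2-\sum_k n_k^2\big)$. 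Substituting yields
\[
E(\bm{x}) = \epsilon X + \frac{\gamma}{2}X^2 + \Big(\frac{\eta}{\tau}-\frac{\gamma}{2}\Big)\sum_{k=1}^{N} n_k^2 .
\]

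Next I would perform the minimization over admissible trajectories. Only the last term is trajectory-dependent, so minimizing $\sum_k n_k^2$ subject to $\sum_k n_k = X$ (Cauchy--Schwarz, with minimizer $n_k=X/N$) gives the optimal expected shortfall
\[
E(X) = \epsilon X + c\,X^2, \qquad c = \frac{\gamma}{2}+\frac{1}{N}\Big(\frac{\eta}{\tau}-\frac{\gamma}{2}\Big) > 0 ,
\]
where positivity follows by writing $c=\tfrac{\gamma}{2}\big(1-\tfrac1N\big)+\tfrac{\eta}{N\tau}$ with $N\ge 1$ and $\gamma,\eta>0$. The structural facts I want to extract are that $E$ is convex in $X$ and $E(0)=0$; both are immediate from this closed form.

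Finally I would invoke the superadditivity of any convex function vanishing at the origin. For $X_i,X_\ell\ge 0$, writing $X_i = \tfrac{X_i}{X_i+X_\ell}(X_i+X_\ell)+\tfrac{X_\ell}{X_i+X_\ell}\cdot 0$ and applying convexity gives $E(X_i)\le \tfrac{X_i}{X_i+X_\ell}E(X_i+X_\ell)$, and symmetrically for $X_\ell$; summing yields $E(X_i)+E(X_\ell)\le E(X_i+X_\ell)$. Iterating over the $J$ agents (or, equivalently, inserting $\sum_j X_j^2\le(\sum_j X_j)^2$ directly into the closed form) establishes $\sum_{j=1}^{J}E(X_j)\le E(\sum_{j=1}^{J}X_j)$.

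I expect the main obstacle to lie in the reduction step rather than the final inequality: one must verify the telescoping identity and, more delicately, justify that the \emph{same} coefficient $c$ governs both sides. This holds because the $\epsilon$-term contributes the trajectory-independent amount $\epsilon X$ while the remaining cost is homogeneous of degree two in the trajectory, so the optimal \emph{normalized} selling schedule is scale-invariant. The argument therefore remains valid even if the risk term $\lambda V(\bm{x})$ of (3) is retained in the objective, since $V$ is likewise degree-two homogeneous and only the value of $c$ changes while convexity and $E(0)=0$ persist.
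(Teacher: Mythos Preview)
Your proof is correct and follows essentially the same route as the paper: both reduce the optimal expected shortfall to the form $E(X)=\epsilon X+cX^{2}$ with $c>0$ and then apply $(\sum_{j} X_{j})^{2}\ge\sum_{j} X_{j}^{2}$ to the quadratic part. The paper simply quotes this closed form from equation~(20) of Almgren--Chriss (with the $\lambda$-dependence already absorbed into the coefficient $\phi$) and performs the one-line comparison, whereas you derive the quadratic structure from scratch in the risk-neutral case and then extend to general $\lambda$ via your degree-two homogeneity observation---a more self-contained presentation, and your convexity/superadditivity framing is a clean abstraction, but the underlying mechanism is identical.
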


\begin{proof}
According to the Almgren and Chriss model, (namely, equation (20) in \cite{almgren2001optimal}), the optimal expected shortfall is:
    \begin{equation}\label{eq:shortfall}
        E(X) = \frac{1}{2} \gamma X^2 + \epsilon X + \tilde \eta \phi X^2,
    \end{equation}
where $X$ is the initial stock size and $\phi$ is a parameter related with environment setting but unrelated with the stock size $X$.

Therefore,
    \begin{align*}
    E(\sum_{j=1}^{J}X_j) &= \frac{1}{2} \gamma (\sum_{j=1}^{J}X_j)^2 + \epsilon \sum_{j=1}^{J}X_j + \tilde \eta (\sum_{j=1}^{J}X_j)^2 \phi\\
    &\ge \frac{1}{2} \gamma \sum_{j=1}^{J}X_j^2 + \epsilon \sum_{j=1}^{J}X_j +\tilde \eta \sum_{j=1}^{J}X_j^2 \phi\\
    &=\sum_{j=1}^{J}E(X_j).
    \end{align*}

\end{proof}

\subsection{Multi-agent Interaction}

\begin{theorem}
\label{Lemma_Second_theorem}
In a two-agent environment where agent $1$ has risk aversion level $\lambda_1$ and agent $2$ has risk aversion level $\lambda_2$, where $\lambda_1 \neq \lambda_2$, and each of them has the same number of stocks to liquidate, the biased trajectories $x(\lambda_1)$ and $x(\lambda_2)$ would satisfy that
        $$
        \bm{x}^*(\lambda_1) \neq \bm{x}(\lambda_1),~\bm{x}^*(\lambda_2) \neq \bm{x}(\lambda_2),
        $$
where $\bm{x}^*(\lambda_1)$ and $\bm{x}^*(\lambda_2)$ are the optimal trading trajectories when they are the only player in the market.
\end{theorem}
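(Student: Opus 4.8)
The plan is to write each agent's expected implementation shortfall explicitly in the two-agent environment, isolate the part that coincides with its single-agent objective, and show that what remains is a \emph{linear} perturbation of the trajectory whose gradient is nonzero precisely because the other agent is actually trading. Since the single-agent objective is a strictly convex quadratic in the trajectory, shifting its linear part necessarily moves the unique minimizer, which is exactly the assertion $\bm{x}(\lambda_i)\neq\bm{x}^*(\lambda_i)$.

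First I would set up the coupled price process. Writing the post-permanent-impact price as $S_k = S_0 + (\text{martingale}) - \gamma\sum_{m\le k}(n_{1,m}+n_{2,m})$, the permanent impact now depends on the \emph{total} order flow of both agents. Computing agent $1$'s expected shortfall against $S_0$ and using $\sum_k n_{1,k}=X_1$, the self-terms collapse by the standard identity $\sum_{m<k}n_{1,m}n_{1,k}=\tfrac12(X_1^2-\sum_k n_{1,k}^2)$ (the same algebraic reduction underlying the proof of Theorem~\ref{First_theorem_of_multi_agents_trading}) to the familiar single-agent Almgren--Chriss cost, leaving a cross term
\[
C_1 \;=\; \gamma\sum_{k}n_{1,k}\bigl(X_2-x_{2,k-1}\bigr),
\]
which is linear in agent $1$'s controls $n_{1,k}=x_{1,k-1}-x_{1,k}$. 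Because the exogenous noise $\sigma\tau^{1/2}\xi_k$ is unchanged and the cross impact is deterministic given agent $2$'s static schedule, the variance term $V_1=\sigma^2\tau\sum_k x_{1,k}^2$ is identical to the single-agent case; hence agent $1$'s utility is exactly $U_1^{\mathrm{single}}(\bm{x}_1)+C_1$, with $U_1^{\mathrm{single}}$ the Almgren--Chriss utility for risk aversion $\lambda_1$.

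The second step is the first-order comparison. Treating $x_{1,1},\dots,x_{1,N-1}$ as free variables (with $x_{1,0}=X_1$ and $x_{1,N}=0$), $U_1^{\mathrm{single}}$ is a positive-definite quadratic whose unique minimizer $\bm{x}^*(\lambda_1)$ is characterized by $\nabla U_1^{\mathrm{single}}=0$. A short calculation gives $\partial C_1/\partial x_{1,k}=\gamma\,n_{2,k}$, so the minimizer $\bm{x}(\lambda_1)$ of the perturbed objective satisfies $\nabla U_1^{\mathrm{single}}=-\gamma\bigl(n_{2,1},\dots,n_{2,N-1}\bigr)$. Since agent $2$ liquidates $X_2>0$ shares over a nondegenerate $\sinh$-shaped schedule, $n_{2,k}>0$ for some free index $k$, so the right-hand side is nonzero; thus $\bm{x}^*(\lambda_1)$ cannot satisfy the perturbed stationarity condition and $\bm{x}(\lambda_1)\neq\bm{x}^*(\lambda_1)$. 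Exchanging the roles of the two agents gives $\bm{x}(\lambda_2)\neq\bm{x}^*(\lambda_2)$.

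The main obstacle is making the cross-impact term precise and defensible: I must fix the timing and sign conventions (whether temporary impact is purely self-incurred, and exactly which of agent $2$'s trades have already moved the price when agent $1$ executes) so that $C_1$ comes out linear with the clean gradient $\gamma n_{2,k}$, and I must justify treating both schedules as deterministic so that the variance is genuinely unperturbed. I would also flag that the mechanism driving the result is the nonvanishing cross-impact gradient rather than the inequality $\lambda_1\neq\lambda_2$ itself; the distinct-risk-aversion hypothesis serves to guarantee that the two single-agent benchmarks $\bm{x}^*(\lambda_1)$ and $\bm{x}^*(\lambda_2)$ are themselves distinct $\sinh$-profiles, so that the conclusion compares two genuinely different reference trajectories.
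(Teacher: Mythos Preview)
Your argument is correct and is a genuinely different route from the paper's. The paper does not work agent by agent; instead it forms an \emph{aggregate} utility $U(\bm{x})=E(\bm{x})+\lambda_1 V(\bm{x}_1)+\lambda_2 V(\bm{x}_2)$ for the total inventory $\bm{x}=\bm{x}_1+\bm{x}_2$, observes (under the equal-split $x_{1,k}=x_{2,k}$) that the first-order condition in $x_k$ reduces to the standard Almgren--Chriss difference equation with an averaged parameter $\tilde\kappa^{*2}=\tfrac{(\lambda_1+\lambda_2)}{2}\sigma^2/\tilde\eta$, and then notes that the resulting $\sinh$-profile with $\kappa^*$ cannot equal the sum of the two single-agent $\sinh$-profiles with $\kappa(\lambda_1)\neq\kappa(\lambda_2)$. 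From that aggregate discrepancy it infers that the individual trajectories must be biased.

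Your approach isolates, for each agent separately, the cross-impact contribution $C_1$ and shows it is linear in $\bm{x}_1$ with gradient $\gamma n_{2,k}$; strict convexity of the single-agent quadratic then forces the minimizer to move whenever agent~2 trades at all. This is more direct and more robust: it establishes the bias for \emph{each} agent individually (the paper only literally proves the aggregate statement and then asserts the individual one), and it makes transparent---as you note---that the driving mechanism is the nonzero cross-impact rather than $\lambda_1\neq\lambda_2$. What the paper's route buys in exchange is an explicit closed form for the new aggregate trajectory (a $\sinh$-profile at the averaged $\kappa^*$), which your perturbation argument does not produce; and its use of $\lambda_1\neq\lambda_2$ is exactly what makes $\kappa^*\neq\kappa(\lambda_i)$ and hence the two $\sinh$-profiles distinguishable, so the hypothesis is load-bearing there in a way it is not for you. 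Your caveats about timing conventions and the deterministic treatment of the opponent's schedule are the right places to tighten the write-up, but the core idea is sound.
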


\begin{remark}
\label{Second_theorem_of_multi_agents_trading}
In a multi-agent environment where each agent has risk aversion level $\lambda_j$, the actual trading trajectory $\bm{x}(\lambda_j)$ would be biased against the optimal trading trajectory. 
\end{remark}

\begin{proof}
According to (4) of the Almgren and Chriss model \cite{almgren2001optimal},
    $$
    V(\bm{x}) = \sigma^2 \sum_{k=1}^{N}\tau x_k^2
    $$
is irrelevant to either temporary or permanent price changes, where $x_k$ is the remaining shares remaining at time $t_k$.
The optimal trading trajectory is of the form:
    $$
    x_k = \frac{\sinh(\kappa(\lambda)(T-t_j)}{\sinh(\kappa(\lambda) T)} X,
    $$
where $\kappa(\lambda) = \frac{\lambda \sigma^2}{\eta(1-\frac{\gamma \tau}{2\eta})}$.

Let $X$ be the total stock size and agent $1$ and $2$ each has $\frac{1}{2} X$ shares, the utility function
    $$
    U(\bm{x}) = E(\bm{x}) + \lambda^*V(\bm{x})
    $$
is a quadratic function of the parameters $x_1,\ldots,x_{N-1}$, where $\lambda^*$ is the synthesized risk aversion level, $\bm{x}$ is the trading trajectory, and it could also be written as:
    $$
    U(\bm{x}) = E(\bm{x}) + \lambda_1 V(\bm{x}_1) + \lambda_2 V(\bm{x}_2),
    $$
where $\bm{x}_1,\bm{x}_2$ is the trading trajectory for agent $1,2$, respectively. Then:
    $$
    \frac{\partial U}{\partial x_k} = 2\tau \left \{(\frac{\lambda_1 + \lambda_2}{2}) \sigma^2 x_k -\tilde \eta \frac{x_{k-1}-2x_k+x_{k+1}}{\tau^2} \right \},
    $$
and $\frac{\partial U}{\partial x_k} = 0$ is equivalent to 
    
    \begin{equation}\label{eq:sinh}
        \frac{1}{\tau^2}(x_{k-1}-2x_k+x_{k+1}) = (\tilde \kappa^*)^2 x_k
    \end{equation}
with 
    $$
    \tilde \kappa^*= \frac{\frac{(\lambda_1 + \lambda_2)}{2}\sigma^2}{\eta(1-\frac{\gamma \tau}{2\eta})}, 
    $$
where the tilde denotes an $\mathbb O(\tau)$ correction; as $\tau \rightarrow 0 $, we have $\tilde \kappa \rightarrow \kappa$. 
Then, we know that the solution to (\ref{eq:sinh}) is:
    \begin{align*}
        x_k&= \frac{\sinh(\kappa^*(T-t_k))}{\sinh(\kappa T)} X\\
        &\neq \frac{\sinh(\kappa(\lambda_1)(T-t_k))}{\sinh(\kappa(\lambda_1) T)} \frac{1}{2}X+\frac{\sinh(\kappa(\lambda_2)(T-t_k))}{\sinh(\kappa(\lambda_2) T)} \frac{1}{2}X,
    \end{align*}

where the right-hand side is the total number of remaining shares at time $t$ if both agents follow their original trading trajectories, and that is not equal to the total remaining shares at time $t$ under optimal trading trajectory, which is the left-hand side of the function. In other words, their new trading trajectories would be biased.
\end{proof}

\section{Performance Evaluation}
\label{sect:Results}
% The method
We first describe the simulation environment in details, and then verify the Theorem \ref{First_theorem_of_multi_agents_trading} and Theorem \ref{Lemma_Second_theorem} by experiments. We then use reinforcement learning methods to demonstrate how agents learn to cooperate or compete with each other by defining proper reward functions, and analyze how the relationship would influence each individual player as well as the environment. Finally, we derive practical trading strategies in a multi-agent environment.

We implement a typical reinforcement learning workflow to train the actor and critic. We change the single-agent Almgren and Chriss model \cite{almgren2001optimal} settings to build the multi-agent environment. We adjust reward functions to manipulate agents' relationships. We use Alg. \ref{alg:DDPG} to find a policy that can generate the optimal trading trajectory with minimum implementation shortfall. We feed the states observed from our simulator to each agent. These agents first predict actions using the actor model and perform these actions in the environment. Then, environment returns their rewards and new states. This process continues for a given number of episodes. 

\subsection{Simulation Environment}
\label{Simulation_env}
This environment simulates stock prices that follow a discrete arithmetic random walk, and that the permanent and temporary market impact functions are linear functions of the rate of trading, as in the Almgren and Chriss model \cite{almgren2001optimal}. 

We set the total number of shares to $1$ million and the initial stock price to be $P_0 = 50$, which gives an initial portfolio value of  $\$50$ million dollars. The stock price has $12\%$ annual volatility, a bid-ask spread of ${1}/{8}$, the difference between ask price and bid price, and an average daily trading volume of 5 million shares. Assuming that there are 250 trading days in a year, this gives a daily volatility in stock price of  ${0.12}/{\sqrt{250}}\approx 0.8\%$ . We use a liquidation time frame of $T = 60 $ days and we set the number of trades $N = 60$. This leads to $\tau = \frac{T}{N} = 1$ , which means that we will be making one trade per day. These settings are changeable and can be adjusted to same day liquidation as well. 

For the temporary cost function, we set the fixed cost of selling to be ${1}/{2}$ of the bid-ask spread, so $\epsilon = {1}/{16}$. We set $\eta$ such that for each one percent of the daily volume we trade, the price impact equals to the bid-ask spread. For example, trading at a rate of  $5\%$  of the daily trading volume incurs a one-time cost on each trade of ${5}/{8}$. Under this assumption we have  $\eta = {{(1/8)}/ {(0.01 \times 5\times 10^6)}}=2.5\times 10^6$.

For the permanent costs, a common rule of thumb is that price effects become significant when we sell $10\%$ of the daily volume. Here, by "significant" we mean that the price depression is one bid-ask spread, and that the effect is linear for both smaller and larger trading rates, then we have $\gamma = {{(1/8)}/{(0.1\times 5 \times 10^6)}} = 2.5 \times 10 ^7$.

In all our experiments, we run the program for $10000$ episodes, unless it is specified. Also, we use the following reward definition: 
        \begin{equation}
            \tilde R_{j,t} = {{U_{j,t}(\bm{x}_{j,t}^*)-U_{j,t+1}(\bm{x}_{j,t+1}^*)}\over {U_{j,t}(\bm{x}_{j,t}^*)}},
        \end{equation}
which normalizes the reward.

\subsection{Theorem Verification}
\subsubsection{Optimal Liquidation Shortfall}

We first train one agent $A$ who would need to liquidate $1$ million shares of a stock. Then we train two agents $B_1$ and $B_2$ who have the same targets and $X_1,X_2 = 0.3,0.7$ million shares, respectively. Agents $A$, $B_1$ and $B_2$ have the same risk aversion level $\lambda_A = \lambda_{B_1} = \lambda_{B_2} = 1e-6$. As we can see from Fig. \ref{fig:Expected_Implementation_Shortfall}, the expected implementation shortfall $E(A)$ is larger than the sum of $E(B1)$ and $E(B2)$.

This result justifies Theorem \ref{First_theorem_of_multi_agents_trading}. The intuition behind Theorem \ref{First_theorem_of_multi_agents_trading} and Equation \ref{eq:shortfall} is that the total expected shortfall increases faster than the total number of stock shares. 

\begin{figure}
    \centering
    \includegraphics[scale = 0.4]{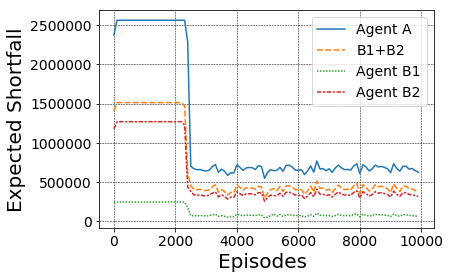}
    \caption{Comparison of expected implementation shortfalls: there are three agents $A, B1$ and $B2$. The expected shortfall of agent A is higher than the sum of two expected shortfalls $B_1$ and $B_2$.}
    \label{fig:Expected_Implementation_Shortfall}
\end{figure}
\iffalse
\subsection{The Expected Value of Implementation Shortfall}

\begin{theorem}
In a multi-agents environment with J agents where each agent has $S_j$ shares to sell within a given time frame $T_j$, if $S_m = S_n$ and $T_m = T_n$, where $1 \le m,n \le J$, then $I(S_m) = I(S_n)$. 
\end{theorem}
\fi
\subsubsection{Multi-agent Interaction}
Here we would like to analyze the trading trajectory of two agents, or Theorem \ref{Lemma_Second_theorem} as an illustration. We first train agent $A_1$ with risk aversion level $\lambda_{A_1} = 1e-4$ and agent $A_2$ with risk aversion level $\lambda_{A_2} = 1e-9$. Both $A_1$ and $A_2$ are trained separately in a single-agent environment. Then we train agent $B_1$ and $B_2$ with risk aversion level $\lambda_{B_1} = 1e-4, \lambda_{B_2} = 1e-9$, respectively, in a two-agent environment. All these agents have the same goal as defined in Section \ref{sect:Solution}. The trading trajectories of $A_1,A_2,B_1,B_2$ are shown in Fig. \ref{fig:Trading_trajectory}. 

Comparing to the single-agent environment, we can see that the trading trajectory of $B_1$ and $B_2$ are biased. Unlike the single-agent scenario, where they can sell their shares independently, now they have to take into consideration of other players in the market. The selling patterns of other agents would affect their liquidation strategy. The results not only justify Remark \ref{Second_theorem_of_multi_agents_trading} and Theorem \ref{Lemma_Second_theorem}, it also demonstrates the necessity of using multi-agents reinforcement learning algorithm to derive trading strategy. All traders are influencing each other when they are executing their own strategy. Therefore, training one agent in a single-agent environment over-simplifies the stochastic and dynamic nature of the stock market, as we have explained in Section \ref{sect:Introduction}. 

\begin{figure}
    \centering
    \includegraphics[scale = 0.4]{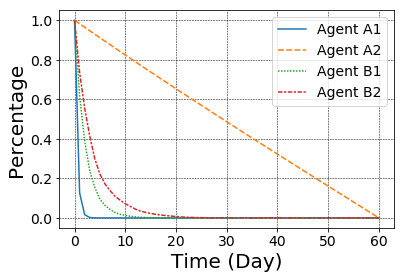}
    \caption{Trading trajectory: comparing to their original trading trajectories, their current trading trajectories are closer to each other when they are trained in a multi-agent environment.}
    \label{fig:Trading_trajectory}
\end{figure}

\subsection{Multi-agent Coordinated Relationship}
To analyze the emergence of a variety of coordinated behaviors, we adjust the rewarding schemes to change the relationship between agents. There are only two agents in this environment for illustration purpose. Each agent would be responsible for selling $0.5$ million shares of a stock. They share the same risk aversion level $\lambda = 1e-6$. The only difference between the next two experiments is the definition of the reward functions. Then we compare the sum of expected shortfalls with the expected shortfall trained independently, to evaluate how the relationship would affect the total as well as individual implementation shortfalls.

\subsubsection{Multi-agents Cooperation}
\label{results:Cooperation}
In this setting we want to analyze how agents would behave when they are in a cooperative relationship. Therefore, we adjust the reward function as follows:
        \begin{equation}
            \tilde R_{1,t}^*=\tilde R_{2,t}^* = {{\tilde R_{1,t}+\tilde R_{2,t}} \over {2}},
        \end{equation}
where $\tilde R_{j,t}^*$ are the new reward functions.

Both agents would be rewarded by the sum of their individual rewards. So the two agents would be fully cooperative to minimize implementation shortfall. The result is shown in Fig. \ref{fig:Cooperative_and_competative}. First, we notice that the sum of expected shortfall does not change much comparing to training two agents with reward function $\tilde R_{j,t}$. Secondly, new individual implementation shortfall $E^*(\bm{x}_j^*)$ does not change much comparing to the original implementation shortfall $E(\bm{x}_j^*)$, where $\bm{x}_j^*$ is the optimal trading trajectory.

\begin{figure}
    \centering
    \includegraphics[scale = 0.4]{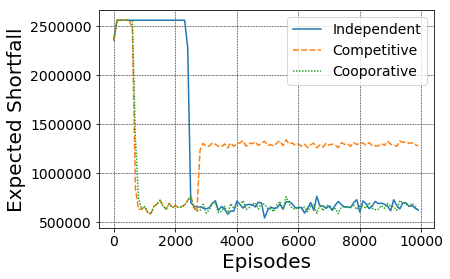}
    \caption{Cooperative and competitive relationships: if two agents are in cooperative relationship, the total expected shortfall is not better than training with independent reward functions. If two agents are in a competitive relationship, they would first learn to minimize expected shortfall, and then malignant competition leads to significant implementation shortfall increment.}
    \label{fig:Cooperative_and_competative}
\end{figure}

\subsubsection{Multi-agents Competition}
\label{results:Competition}
In this setting we want to analyze how agents would behave when they are in a competitive relationship. Therefore, we adjust the reward function as follows:

\begin{algorithmic}
        \IF{$\tilde R_{1,t} > \tilde R_{2,t}$}
        \STATE $\tilde R_{1,t}^* = \tilde R_{1,t}$,
        \STATE $\tilde R_{2,t}^* = \tilde R_{2,t} - \tilde R_{1,t}$,
        \ELSE 
        \STATE $\tilde R_{2,t}^* = \tilde R_{2,t}$,
        \STATE $\tilde R_{1,t}^* = \tilde R_{1,t} - \tilde R_{2,t}$,
        \ENDIF

\end{algorithmic}
where $\tilde R_{j,t}^*$ are the new reward functions. 

In this case, the agent that gets higher reward would keep it, but the agent with lower reward would be penalized. It would receive reward value equal to its original reward minus the higher reward, which is a negative value. As we can see from Fig. \ref{fig:Cooperative_and_competative} that the sum of expected shortfalls ends up with about twice as they are independent or in cooperative relationship. By looking at the snapshot of trading trajectory at 1500 episode and 10000 episode, we notice that at the 1500 episode, these two agents learn to maximize utility function defined in Section \ref{DPL:liquidatoin}. Both agents perform well and roughly have the same expected shortfall. However, as they are in a competitive relationship, after another 500 episodes of training, one agent learns to ourperform the other, which leads to the significant increment of the sum of expected shortfall, or $\sum_{j=1}^{2}E^*(\bm{x}_j^*) >\sum_{j=1}^{2}E(\bm{x}_j^*)$. The trading trajectory of the last episode shows that one agent learns to sell all its shares on $Day~1$. In addition, the expected shortfall for both agents increase, or $E^*(\bm{x}_j^*) > E(\bm{x}_j^*)$. 

We conclude that not only their overall performance is diminished, their individual performance is worse as well. None of them is winning from their mutual competition. 
% Add Nash Equilibrium analysis

\subsection{Liquidation Strategy Development}
We use reinforcement learning algorithm to develop trading strategies, given the trading trajectories of the competitors. 

%\subsubsection{Competitors With Selling Pattern}

\begin{figure}
    \centering
    \includegraphics[scale = 0.4]{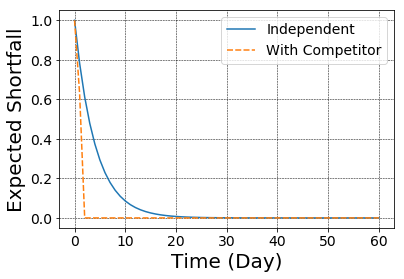}
    \caption{Trading trajectory: comparing to independent training, introducing a competitor makes the host agent learn to adapt to new environment and sell all shares of stock in the first two days.}
    \label{fig:Have_competitor}
\end{figure}

Here we introduce an agent who has $0.5$ million shares of stocks to sell and has risk aversion level $\lambda = 1e-9$. We have already seen in Fig. \ref{fig:Trading_trajectory} that the optimal trading trajectory for such agent would be a straight line, which means it sells a fixed amount of stocks everyday.

We train an agent who has another $0.5$ million shares of stocks to sell with risk aversion level $\lambda = 1e-6$. For comparison purpose, we also draw the optimal trading trajectory when the agent is trained independently in a single-agent environment. As we can see in Fig. \ref{fig:Have_competitor}, if there is no competitor, the optimal trajectory shows that the agent would complete the liquidation process in about $20$ days. After we introduced the competitor, the trading trajectory completely changed. Now the agent sells all its shares within the first $2$ days. The agent learns to avoid taking unnecessary risk by selling all shares in a quite short time, and let the competitor agent to bear the execution cost of price drop.   

%\subsubsection{Competitor with Randomized Number of Stocks}
% Agent B would sell 1 million shares in 10, 30, 60, with fixed amount everyday. 

\section{Conclusion and Future Work}
\label{sect:Conclusion}

% Concludes this paper
\subsection{Contributions}
We have shown the single-agent environment over-simplifies the dynamic as well as the interactive nature of the stock market. All orders are generated by individual traders, and these traders act as game players, especially for a systematic trading problem like liquidation. 

In the present work, we extended the scope of Almgren and Chriss model \cite{almgren2001optimal} and used reinforcement learning method to verify it, which setups the foundation of the multi-agent trading environment. We illustrate the demand to use multi-agent environments to develop trading strategies. We analyze how fully cooperative and competitive relationship would affect the total and individual implementation shortfalls, respectively. We conclude that cooperative relationship is not better than independent one, and competitive relationship would hurt the overall and individual performance. Finally, we demonstrate the capability of reinforcement learning agent and derived optimal liquidation strategy for the host agent against its competitor. 

\subsection{Limitations}
As the goal of this paper is to analyze the environment and agents interaction, we keep simple setting as long as it is reasonable. Therefore, we did not build more complex neural network architectures, and our best expected shortfall after 10000 episodes of training is roughly $20\%$ higher than the optimal expected shortfall derived by the Almgren and Chriss model \cite{almgren2001optimal}. We can add more dynamic factors in the state vector. Also, advanced background models other than Amlgren and Chriss model could also be considered. While all these methods could potentially improve this work, we believe that at this moment they are not necessary for describing the nature of multi-agent trading environments and analyzing agents' behaviors, for this preliminary analysis of liquidation problem.

% Future work to do
\subsection{Future Work}
Development of more realistic trading environment, including more dynamic factors such as news, general strategy and legal complaints, would make great contributions to financial analysis. A potential extension is the study of stock liquidation by considering the optimistic bull or pessimistic bear \cite{Xinyi2019ICML} or the anomaly events \cite{Xinyi2019KDD}. A potential application would be using predicted agents' behaviors to predict stock price movements, say LSTM \cite{Xinyi2019KDD}. %It is also possible to incorporate real historical trading data in the training process.

%\iftrue
%\section*{Acknowledgement}
%The authors would like to thank Udacity Deep Reinforcement Learning Nanodegree. The course significantly accelerated our understanding of using reinforcement learning to resolve liquidation problem and the practice code was used for some early experiments associated with this paper. 

%\fi

\bibliographystyle{icml2019}
%\bibliography{reference}

\begin{thebibliography}{24}
\providecommand{\natexlab}[1]{#1}
\providecommand{\url}[1]{\texttt{#1}}
\expandafter\ifx\csname urlstyle\endcsname\relax
  \providecommand{\doi}[1]{doi: #1}\else
  \providecommand{\doi}{doi: \begingroup \urlstyle{rm}\Url}\fi

\bibitem[Almgren \& Chriss(2001)Almgren and Chriss]{almgren2001optimal}
Almgren, R. and Chriss, N.
\newblock Optimal execution of portfolio transactions.
\newblock \emph{Journal of Risk}, 3:\penalty0 5--40, 2001.

\bibitem[Bansal et~al.(2017)Bansal, Pachocki, Sidor, Sutskever, and
  Mordatch]{bansal2017emergent}
Bansal, T., Pachocki, J., Sidor, S., Sutskever, I., and Mordatch, I.
\newblock Emergent complexity via multi-agent competition.
\newblock \emph{arXiv preprint arXiv:1710.03748}, 2017.

\bibitem[Brogaard et~al.(2010)Brogaard, Brennan, Korajczyk, Mcdonald, and
  Vissing-jorgensen]{Brogaard10highfrequency}
Brogaard, J.~A., Brennan, T., Korajczyk, R., Mcdonald, R., and
  Vissing-jorgensen, A.
\newblock High frequency trading and its impact on market quality, 2010.

\bibitem[Buehler et~al.(2019)Buehler, Gonon, Teichmann, and
  Wood]{buehler2019deep}
Buehler, H., Gonon, L., Teichmann, J., and Wood, B.
\newblock Deep hedging.
\newblock \emph{Quantitative Finance}, pp.\  1--21, 2019.

\bibitem[Foerster et~al.(2017)Foerster, Nardelli, Farquhar, Afouras, Torr,
  Kohli, and Whiteson]{foerster2017stabilising}
Foerster, J., Nardelli, N., Farquhar, G., Afouras, T., Torr, P.~H., Kohli, P.,
  and Whiteson, S.
\newblock Stabilising experience replay for deep multi-agent reinforcement
  learning.
\newblock In \emph{Proceedings of the 34th International Conference on Machine
  Learning-Volume 70}, pp.\  1146--1155. JMLR. org, 2017.

\bibitem[Gomber et~al.(2011)Gomber, Arndt, Lutat, and
  Elko~Uhle]{gomber2011high}
Gomber, P., Arndt, B., Lutat, M., and Elko~Uhle, T.
\newblock High-frequency trading.
\newblock \emph{SSRN Electronic Journal}, 01 2011.
\newblock \doi{10.2139/ssrn.1858626}.

\bibitem[Hendricks \& Wilcox(2014)Hendricks and
  Wilcox]{hendricks2014reinforcement}
Hendricks, D. and Wilcox, D.
\newblock A reinforcement learning extension to the almgren-chriss framework
  for optimal trade execution.
\newblock In \emph{IEEE Conference on Computational Intelligence for Financial
  Engineering \& Economics (CIFEr)}, pp.\  457--464. IEEE, 2014.

\bibitem[Li et~al.(2019{\natexlab{a}})Li, Li, , Liu, and Wang]{Xinyi2019KDD}
Li, X., Li, Y., , Liu, X.-Y., and Wang, C.
\newblock Risk management via anomaly circumvent: Mnemonic deep learning for
  midterm stock prediction.
\newblock In \emph{KDD Workshop on Anomaly Detection in Finance},
  2019{\natexlab{a}}.

\bibitem[Li et~al.(2019{\natexlab{b}})Li, Li, Zhan, and Liu]{Xinyi2019ICML}
Li, X., Li, Y., Zhan, Y., and Liu, X.-Y.
\newblock Optimistic bull or pessimistic bear: adaptive deep reinforcement
  learning for stock portfolio allocation.
\newblock In \emph{ICML Workshop on Applications and Infrastructure for
  Multi-Agent Learning}, 2019{\natexlab{b}}.

\bibitem[Lillicrap et~al.(2016)Lillicrap, Hunt, Pritzel, Heess, Erez, Tassa,
  Silver, and Wierstra]{lillicrap2015continuous}
Lillicrap, T.~P., Hunt, J.~J., Pritzel, A., Heess, N., Erez, T., Tassa, Y.,
  Silver, D., and Wierstra, D.
\newblock Continuous control with deep reinforcement learning.
\newblock \emph{ICLR}, 2016.

\bibitem[Lowe et~al.(2017)Lowe, Wu, Tamar, Harb, Abbeel, and
  Mordatch]{lowe2017multi}
Lowe, R., Wu, Y., Tamar, A., Harb, J., Abbeel, O.~P., and Mordatch, I.
\newblock Multi-agent actor-critic for mixed cooperative-competitive
  environments.
\newblock In \emph{Advances in Neural Information Processing Systems}, pp.\
  6379--6390, 2017.

\bibitem[Mnih et~al.(2015)Mnih, Kavukcuoglu, Silver, Rusu,
  et~al.]{mnih2015human}
Mnih, V., Kavukcuoglu, K., Silver, D., Rusu, et~al.
\newblock Human-level control through deep reinforcement learning.
\newblock \emph{Nature}, 518\penalty0 (7540):\penalty0 529, 2015.

\bibitem[Mnih et~al.(2016)Mnih, Badia, Mirza, Graves, Lillicrap, Harley,
  Silver, and Kavukcuoglu]{mnih2016asynchronous}
Mnih, V., Badia, A.~P., Mirza, M., Graves, A., Lillicrap, T., Harley, T.,
  Silver, D., and Kavukcuoglu, K.
\newblock Asynchronous methods for deep reinforcement learning.
\newblock In \emph{International Conference on Machine Learning}, pp.\
  1928--1937, 2016.

\bibitem[Omidshafiei et~al.(2017)Omidshafiei, Pazis, Amato, How, and
  Vian]{omidshafiei2017deep}
Omidshafiei, S., Pazis, J., Amato, C., How, J.~P., and Vian, J.
\newblock Deep decentralized multi-task multi-agent reinforcement learning
  under partial observability.
\newblock In \emph{Proceedings of the 34th International Conference on Machine
  Learning-Volume 70}, pp.\  2681--2690. JMLR. org, 2017.

\bibitem[Schaul et~al.(2015)Schaul, Quan, Antonoglou, and
  Silver]{schaul2015prioritized}
Schaul, T., Quan, J., Antonoglou, I., and Silver, D.
\newblock Prioritized experience replay.
\newblock \emph{arXiv preprint arXiv:1511.05952}, 2015.

\bibitem[Silver et~al.(2016)Silver, Huang, Maddison, Guez,
  et~al.]{silver2016mastering}
Silver, D., Huang, A., Maddison, C.~J., Guez, A., et~al.
\newblock Mastering the game of go with deep neural networks and tree search.
\newblock \emph{Nature}, 529\penalty0 (7587):\penalty0 484, 2016.

\bibitem[Sutton \& Barto(2018)Sutton and Barto]{sutton2018reinforcement}
Sutton, R.~S. and Barto, A.~G.
\newblock \emph{Reinforcement learning: An introduction}.
\newblock MIT press, 2018.

\bibitem[Tampuu et~al.(2017)Tampuu, Matiisen, Kodelja, Kuzovkin, Korjus, Aru,
  Aru, and Vicente]{tampuu2017multiagent}
Tampuu, A., Matiisen, T., Kodelja, D., Kuzovkin, I., Korjus, K., Aru, J., Aru,
  J., and Vicente, R.
\newblock Multiagent cooperation and competition with deep reinforcement
  learning.
\newblock \emph{PloS One}, 12\penalty0 (4):\penalty0 e0172395, 2017.

\bibitem[Van~Hasselt et~al.(2016)Van~Hasselt, Guez, and Silver]{van2016deep}
Van~Hasselt, H., Guez, A., and Silver, D.
\newblock Deep reinforcement learning with double q-learning.
\newblock In \emph{Thirtieth AAAI Conference on Artificial Intelligence}, 2016.

\bibitem[Wang et~al.(2016)Wang, Bapst, Heess, Mnih, Munos, Kavukcuoglu, and
  de~Freitas]{wang2016sample}
Wang, Z., Bapst, V., Heess, N., Mnih, V., Munos, R., Kavukcuoglu, K., and
  de~Freitas, N.
\newblock Sample efficient actor-critic with experience replay.
\newblock \emph{arXiv preprint arXiv:1611.01224}, 2016.

\bibitem[Xiong et~al.(2018)Xiong, Liu, Zhong, Walid,
  et~al.]{xiong2018practical}
Xiong, Z., Liu, X.-Y., Zhong, S., Walid, A., et~al.
\newblock Practical deep reinforcement learning approach for stock trading.
\newblock \emph{NeurlIPS Workshop on Challenges and Opportunities for AI in
  Financial Services}, 2018.

\bibitem[Yang et~al.(2018{\natexlab{a}})Yang, Liu, and Wu]{yang2018practical}
Yang, H., Liu, X.-Y., and Wu, Q.
\newblock A practical machine learning approach for dynamic stock
  recommendation.
\newblock In \emph{IEEE International Conference On Trust, Security And Privacy
  (TrustCom)}, pp.\  1693--1697. IEEE, 2018{\natexlab{a}}.

\bibitem[Yang et~al.(2018{\natexlab{b}})Yang, Luo, Li, Zhou, Zhang, and
  Wang]{yang2018mean}
Yang, Y., Luo, R., Li, M., Zhou, M., Zhang, W., and Wang, J.
\newblock Mean field multi-agent reinforcement learning.
\newblock In \emph{35th International Conference on Machine Learning, ICML
  2018}, volume~80, pp.\  5571--5580. PMLR, 2018{\natexlab{b}}.

\bibitem[Yu et~al.(2019)Yu, Lee, Kulyatin, Shi, and Dasgupta]{yu2019model}
Yu, P., Lee, J.~S., Kulyatin, I., Shi, Z., and Dasgupta, S.
\newblock Model-based deep reinforcement learning for dynamic portfolio
  optimization.
\newblock \emph{arXiv preprint arXiv:1901.08740}, 2019.

\end{thebibliography}

\end{document}